\documentclass{ifacconf}
\usepackage{amsmath}
\usepackage{amsfonts}
\usepackage{enumitem}
\usepackage{mathrsfs}
\usepackage[cal=boondoxo]{mathalpha}
\usepackage{graphicx}      
\usepackage{natbib}         

\newcommand\R{\ensuremath{\mathbb{R}}}

\newcommand\Mb{\ensuremath{\mathbf{M}}}

\newcommand\xb{\ensuremath{\mathbf{x}}}
\newcommand\ub{\ensuremath{\mathbf{u}}}
\newcommand\vb{\ensuremath{\mathbf{v}}}

\newcommand\yb{\ensuremath{\mathbf{y}}}

\newcommand\zb{\ensuremath{\mathbf{z}}}

\DeclareEmphSequence{\bfseries\itshape}
\let\epsilon\varepsilon

\begin{document}
\begin{frontmatter}

\title{
\centering
\parbox{\textwidth}{%
\centering
Data-Driven Probabilistic Fault Detection\\
and Identification via Density Flow Matching
}%
}


\thanks[footnoteinfo]{This research is funded in part by the Technology Innovation Institute and DARPA Learning Introspective Control (LINC).}

\author[First]{Joshua D. Ibrahim} 
\author[First]{Mahdi Taheri} 
\author[First]{Soon-Jo Chung} 
\author[First]{Fred Y. Hadaegh}

\address[First]{Division of Engineering and Applied Science, California Institute of Technology, 
   Pasadena, CA 91106 USA (e-mail: {jdibrahi, mtaheri, sjchung, hadaegh}@caltech.edu).}

\begin{abstract}                
Fault detection and identification (FDI) is critical for maintaining the safety and reliability of systems subject to actuator and sensor faults. In this paper, the problem of FDI for nonlinear control-affine systems under simultaneous actuator and sensor faults is studied. We model fault signatures through the evolution of the probability density flow along the trajectory and characterize detectability using the 2-Wasserstein metric. In order to introduce quantifiable guarantees for fault detectability based on system parameters and fault magnitudes, we derive upper bounds on the distributional separation between nominal and faulty dynamics. The latter is achieved through a stochastic contraction analysis of probability distributions in the 2-Wasserstein metric. A data-driven FDI method is developed by means of a conditional flow-matching scheme that learns neural vector fields governing density propagation under different fault profiles. To generalize the data-driven FDI method across continuous fault magnitudes, Gaussian bridge interpolation and Feature-wise Linear Modulation (FiLM) conditioning are incorporated. The effectiveness of our proposed method is illustrated on a spacecraft attitude control system, and its performance is compared with an augmented Extended Kalman Filter (EKF) baseline. The results confirm that trajectory-based distributional analysis provides improved discrimination between fault scenarios and enables reliable data-driven FDI with a lower false alarm rate compared with the augmented EKF.
\end{abstract}

\begin{keyword}
Fault detection and identification, data-driven, flow matching, contraction theory, density space.
\end{keyword}

\end{frontmatter}

\section{Introduction}
Fault detection and identification (FDI) is fundamental to ensuring the safety, reliability, and resilience of safety-critical systems. Applications such as spacecraft attitude control, autonomous ground and aerial vehicles, industrial automation, and robotic systems require rapid detection of actuator and sensor degradations to maintain closed-loop stability and prevent mission-critical failures \cite{10502204,ragan2024online}. Analytical redundancy methods, including parity relations, observer-based residual generation, and model-based FDI approaches, form the basis of most classical solutions \cite{chen2012robust, isermann2005fault}. These techniques have been widely adopted due to their interpretability and real-time feasibility, and remain essential in aerospace and automotive onboard FDI systems.

Despite the extensive literature on the problem of FDI, significant challenges remain when dealing with nonlinear systems under uncertainty. Classical and model-based FDI methods often assume linear dynamics or focus on isolated fault scenarios (i.e., non-simultaneous faults); hence, their performance degrades when facing simultaneous actuator and sensor faults, and stochastic disturbances \cite{venkatasubramanian2003review, patton2013issues}. Learning-based FDI methods address some of these challenges, but typically focus on reconstructing fault parameters along individual trajectories rather than characterizing how faults modify the underlying probability distributions of the system states \cite{bakhtiaridoust2023data, talebi2008recurrent}. 

On the other hand, tools from stochastic analysis, such as the Fokker-Planck equation (FPE), provide a principled description of the evolution of state densities under uncertainties \cite{pavliotis2014stochastic, risken1989fokker}. However, these tools have not been extensively incorporated into FDI frameworks. Moreover, contraction theory has emerged as a powerful methodology for incremental stability in nonlinear deterministic and stochastic systems \cite{Lohmiller1998,tsukamoto2020neural,tsukamoto2020stochneural,tsukamoto2021contraction, bouvrie2019wasserstein}, yet its implications for distinguishing nominal and faulty dynamics at the distributional level remain largely unexplored. Restricted to linear systems under Gaussian density assumptions,~\cite{eriksson2013method,jung2020sensor,fu2018evaluation} have developed density-based methods for detectability and distinguishability of faults. This work addresses these gaps by proposing a distribution-based framework for FDI in nonlinear stochastic systems subject to simultaneous actuator and sensor faults.  

\begin{figure}
\begin{center}
\includegraphics[width=9.0cm]{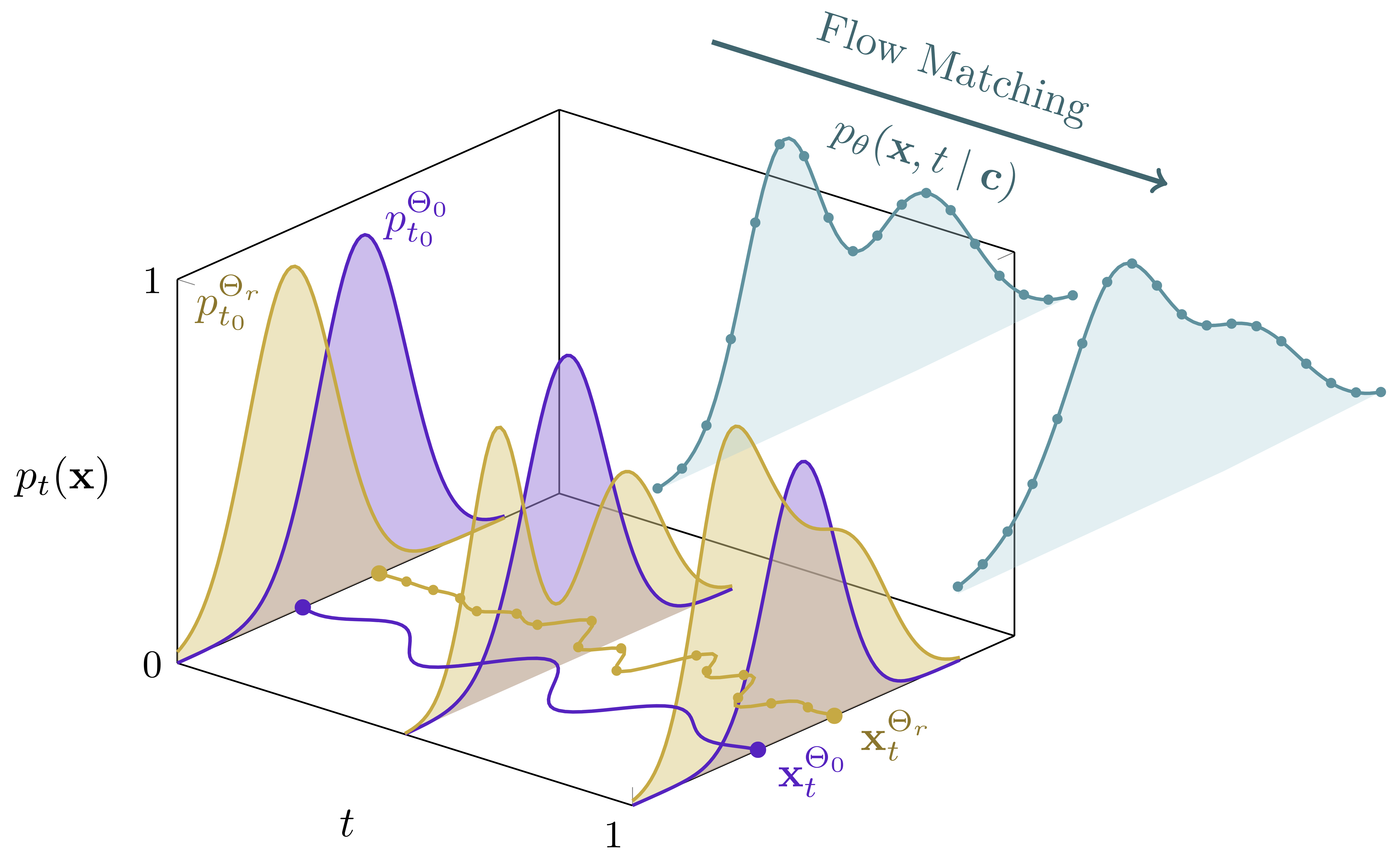}    
\caption{The illustration shows the relationship between the nominal trajectory $\xb^{\Theta_0}$ and the fault trajectory $\xb^{\Theta_r}$ and their induced densities of the nominal $p_t^{\Theta_0}$ and fault $p_t^{\Theta_r}$ systems (foreground) with the learned density $p_{\boldsymbol{\theta}}$ adapting toward the fault system (background).} 
\label{fig:traj_flow_matching}
\end{center}
\end{figure}

In this work, we propose a probabilistic framework to address the problem of FDI that operates on trajectory-level probability densities rather than individual state measurements. The core idea is to characterize each fault scenario by how it shapes the distribution of system trajectories over time. Under nominal conditions, an ensemble of initial states evolves according to a baseline probability density governed by the FPE, while under faulty conditions, the same ensemble follows a perturbed density flow that diverges from the nominal behavior. By comparing the densities of the dynamics, we can characterize fault profiles in a manner more robust than single-trajectory analysis. Using the Wasserstein distance allows for additional consideration of the geometric discrepancy between the density flows (\cite{villani2008optimal}), together leading to improved discrimination and greater reliability in fault detection and identification.

In order to utilize the availability of data in modern systems, we develop a data-driven approach based on a neural network-based conditional flow matching methodology that learns to predict how state densities evolve under different fault conditions. Specifically, we train a neural network conditioned on fault parameters to model the transition dynamics between consecutive states in trajectories. When the trained neural network is deployed, given an observed trajectory with unknown faults, we estimate the fault parameters by minimizing the trajectory-level negative log-likelihood across all candidate fault profiles, effectively inverting the learned density flows to infer which fault scenario best explains the observations. 

The aforementioned neural network-based inference framework enables generalization to unseen fault combinations via continuous interpolation in the fault parameter space, which allows the model to estimate fault magnitudes that were not explicitly present in the training data. The gradient-based optimization during inference (i.e., neural network deployment) exploits the smoothness of the learned density model, converging to accurate fault parameter estimates within a few hundred iterations without requiring explicit retraining or combinatorial search over discrete fault hypotheses.

To summarize, the main contributions of this work are as follows. First, we establish formal theoretical guarantees on fault detectability and identifiability by extending contraction theory to the distributional setting, quantifying the separation between nominal and faulty system behaviors in the 2-Wasserstein metric. We formally provide an explicit upper bound on the Wasserstein distance between probability measures under different fault profiles, which enables rigorous assessment of when and how faults can be detected and distinguished through their differences in trajectory-level density flows. 

Second, we develop a data-driven FDI framework based on conditional flow matching that learns neural vector fields to characterize the evolution of probability densities under different fault scenarios. By leveraging Gaussian bridge interpolation and Feature-wise Linear Modulation (FiLM) conditioning, our approach enables generalization across fault profiles while maintaining computational tractability through trajectory-level negative log-likelihood optimization. 
\section{Problem Formulation}
\subsection{System Model} 
Let us consider the following nonlinear system under actuator and sensor faults
\begin{equation}
\label{eq:full_system_deterministic}
\begin{split}
\dot{\xb} &= f(\xb, t) + G(\xb, t)\Xi\ub, \\
\yb &= \Gamma h(\xb,t),
\end{split}
\end{equation}
where $\mathbf{x}: \mathbb{R}_{\geq 0} \rightarrow \mathbb{R}^n$ is the state, $\mathbf{u}: \mathbb{R}^n \times \mathbb{R}_{\geq 0} \rightarrow \mathbb{R}^m$ is the control input, $\mathbf{y}: \mathbb{R}_{\geq 0} \rightarrow \mathbb{R}^p$ is the output. Moreover, $f: \mathbb{R}^n \times \mathbb{R}_{\geq 0} \rightarrow \mathbb{R}^n, t \in \mathbb{R}_{\geq 0}$, $G: \mathbb{R}^n \times \mathbb{R}_{\geq 0} \rightarrow \mathbb{R}^{n \times m}$, and $h: \mathbb{R}^n \times \mathbb{R}_{\geq 0} \rightarrow \mathbb{R}^p$ are known smooth functions. The unknown matrices $\Xi = \operatorname{diag}(\eta_1(t), \ldots, \eta_m(t)) \in \mathbb{R}^{m \times m}$ and $\Gamma = \operatorname{diag}(\gamma_1(t), \ldots, \gamma_p(t)) \in \mathbb{R}^{p \times p}$ denote multiplicative actuator and sensor faults, respectively, with $\eta_i, \gamma_j \in [0,1]$, for $i=1,\ldots,m$ and $j=1,\ldots,p$. The parameters $\eta_i$ and $\gamma_j$ represent the level of degradation (or effectiveness) of the $i$-th actuator and the $j$-th sensor, respectively, where $\eta_i = 1$ and $\gamma_j = 1$ indicate nominal operation, while $\eta_i = 0$ and $\gamma_j = 0$ indicate complete failure. We denote the $r$-th possible instance of actuator and sensor faults for $r \in \{0,\ldots, N_f\}$ as fault profile $\Theta_r = (\boldsymbol{\eta}, \boldsymbol{\gamma}) \in \mathcal{F}$ where $\mathcal{F}:= [0,1]^m \times [0,1]^p$. We denote  $\Theta_0$ as the nominal, i.e., fault-free, behavior where $\Theta_0 = (\mathbf{1}_m, \mathbf{1}_p)$.  

When necessary, we denote $\mathbf{F}_{\Theta_r}(\xb,t):=f(\xb, t) + G(\xb, t)\Xi\ub_{cl}$ where $\ub = \ub_{cl}$ is some closed-loop controller used to drive the system to some desired behavior. The nonlinear system~\eqref{eq:full_system_deterministic} in the presence of stochastic disturbances can be described by the Itô stochastic differential equation (SDE) in the following form:
\begin{equation}
\label{eq:full_system_stochastic}
\begin{split}
d \mathbf{x} &= \mathbf{F}_{\Theta_r}(\xb,t) d t + \sigma(\mathbf{x}, t) d \mathscr{W}_1(t), \\
\mathbf{y}dt &= \Gamma h(\mathbf{x}, t)dt + d \mathscr{W}_2(t),
\end{split}
\end{equation}
where $\sigma: \mathbb{R}^n \times \mathbb{R}_{\geq 0} \rightarrow \mathbb{R}^{n \times d}$ and $\mathscr{W}_1(t), \ \mathscr{W}_2(t)$ are standard Wiener processes of dimensions $d$ and $p$, respectively. The stochastic system \eqref{eq:full_system_stochastic} can be seen as the \emph{perturbed} version of~\eqref{eq:full_system_deterministic}. We assume that $\exists L_1, L_2 > 0$ such that
\begin{equation}
\label{eq:lipschitz_growth_conditions}
\begin{split}
 \|\mathbf{F}_{\Theta_r}(\xb,t)-\mathbf{F}_{\Theta_r}(\yb,t)\| &+ \|\sigma(\mathbf{x}, t)-\sigma(\mathbf{y}, t)\| \leq L_1\|\mathbf{x}-\mathbf{y}\|, \\
 \|\mathbf{F}_{\Theta_r}(\xb,t)\|^2 &+ \|\sigma(\mathbf{x}, t)\|_F^2 \leq L_2(1+\|\mathbf{x}\|^2),
\end{split}
\end{equation}
$\forall \xb, \yb, t$, which ensure the existence and uniqueness of strong solutions to~\eqref{eq:full_system_stochastic} (see~\cite{karatzas2014brownian}). Moreover, $\|\cdot\|$ and $\|\cdot\|_F$ denote the Euclidean and Frobenius norms, respectively.

\subsection{Density Propagation}
Along with the deterministic and stochastic formulations of the system dynamics, one can also describe the concentration of states for an ensemble of trajectories from the evolution of the state probability density function (PDF). Formally, let $\mu_t, \nu_t \in \mathscr{P}_2(\R^n)$ be the time-evolving probability measures (laws) of the state and $\mathscr{P}_2(\mathbb{R}^n)$ be the space of probability measures on $\mathbb{R}^n$ with finite second moments. We assume that $\mu_t$ is absolutely continuous w.r.t the Lebesgue measure which implies by the Radon–Nikodym theorem that there exists a density $p(\xb,t)\in L^1(\R^n, d\xb)$ such that $d\mu_t(\xb) = p(\xb, t)d\xb$ (see~\cite{pavliotis2014stochastic}) where $L^1$ is the Banach space of integrable functions. 

For the unperturbed system~\eqref{eq:full_system_deterministic}, the density $p(\xb, t) \geq 0$ with initial condition $p(\xb,0) = p_0(\xb)$ evolves according to the Liouville equation as
\begin{equation}
\label{eq:liouville_equation}
\frac{\partial p}{\partial t} = -\nabla_{\xb} \cdot \left(p(\xb,t) \mathbf{F}_{\Theta_r}(\xb,t)\right),
\end{equation}
where $p:\R^n \times \R_{\geq 0} \to \R_{\geq 0}$ and $\nabla_{\xb} \cdot:= \sum_{i=1}^n \frac{\partial}{\partial x_i}$ is the divergence operator w.r.t $\xb$. For the perturbed system~\eqref{eq:full_system_stochastic}, the density $p(\xb, t)$ evolves according to the Fokker-Planck equation (FPE) as
\begin{equation}
\label{eq:fokker_planck_equation}
\frac{\partial p}{\partial t} = -\nabla_{\xb} \cdot (p(\xb,t) \mathbf{F}_{\Theta_r}(\xb,t)) + \frac{1}{2} \nabla_{\xb}\cdot \nabla_{\xb}\cdot (p(\xb,t) D(\xb,t)),
\end{equation} 
where $D(\xb,t) = \sigma(\xb,t)\sigma(\xb,t)^\top$ is known as the diffusion matrix and $\nabla_{\xb}\cdot \nabla_{\xb}\cdot$ is the second order differential operator defined as $\nabla_{\xb}\cdot \nabla_{\xb}\cdot := \sum_{i=1}^n \sum_{j=1}^n \frac{\partial^2}{\partial x_i \partial x_j}$. The Liouville equation~\eqref{eq:liouville_equation} is a special case of the FPE~\eqref{eq:fokker_planck_equation} with zero diffusion $D(\xb,t) = 0$. 

\subsection{Problem Statement}
This paper addresses the problem of fault detection and identification (FDI) for nonlinear stochastic systems subject to simultaneous actuator and sensor faults with stochastic disturbances. We consider the system given by~\eqref{eq:full_system_stochastic}, where $\eta_i \in[0,1]$ and $\gamma_j \in[0,1]$ are multiplicative actuator and sensor faults in the fault profile $\Theta = (\boldsymbol{\eta},\boldsymbol{\gamma})$, respectively. Rather than analyzing individual trajectories, we approach FDI through the evolution of probability density functions. In particular, we consider an ensemble of initial conditions $\xb(0) \sim p_0(\xb)$ under fault profile $\Theta_r$ that generates a time-varying density $p(\xb, t)$ and evolves according to \eqref{eq:fokker_planck_equation}. We aim to develop a data-driven framework that detects faults by quantifying distributional separation between nominal and faulty behaviors via the $2$-Wasserstein metric $\mathbb{W}_2\left(\mu_{t_0: t_N}^{\Theta_r}, \mu_{t_0: t_N}^{\Theta_0}\right)$, identifies fault profiles by learning probability flows through conditional flow matching with a network $\mathcal{F}_\theta$ conditioned on $\mathbf{c}=\left[\boldsymbol{\eta}^{\top}, \boldsymbol{\gamma}^{\top}\right]^{\top}$, and estimates the fault magnitudes $\eta_i$ and $\gamma_j$ by minimizing the negative log-likelihood over the observed trajectory.

\section{Fault Detectability and Identifiability in the Density Space}
Considering the fault profile $\Theta_r$ for the perturbed system \eqref{eq:full_system_stochastic}, the density $p(\xb,t)$ is the solution to~\eqref{eq:fokker_planck_equation} with the initial condition $p(\xb,0) = p_0(\xb)$ at time $t$. For any finite horizon $\{t_0,\dots, t_N\}$, this induces the joint density $p_{t_0:t_N}(\xb_{t_0},\ldots,\xb_{t_N})\in L^1\left((\mathbb{R}^n)^{(N+1)}, d \xb_{t_0} \cdots d \xb_{t_N}\right)$.  This is the space of joint densities over $N+1$ timepoints. We denote $p_{t_0:t_N}^{\Theta_r}$ as the the joint density of states at times $\{t_0,\dots, t_N\}$ under fault profile $\Theta_r$.

Evaluating the joint density over the full horizon rather than at a single time instance allows one to compare entire trajectory signatures and consequently expose sharper distinctions between different fault profiles. Intuitively, a single-time snapshot $p_t$ may not distinguish fault profiles if their state distributions overlap significantly at that instant. However, the entire trajectory signature $p_{t_0:t_N}$ encodes how the system evolves dynamically, amplifying differences between fault modes through cumulative 
divergence over time. To quantify the degree of detectability and identifiability between various fault profiles, we define measures based on distance metrics to compute the separation between densities. 
\begin{defn}[$\epsilon$-Detectability]
Fix any statistical distance metric $\mathscr{D}:\mathscr{P}_2(\mathbb{R}^{n(N+1)}) \times \mathscr{P}_2(\mathbb{R}^{n(N+1)}) \to \mathbb{R}_{\geq 0}$, where $\mathscr{P}_2(\cdot)$ denotes the space of probability measures. For any fault profile $\Theta_r \in \mathcal{F}$ such that $ \Theta_r \neq \Theta_0$ and $r \in \{1,\ldots, N_f\}$, the fault profiles are $\epsilon$-detectable on the horizon $\{t_0, \ldots, t_N\}$ if 
\begin{equation}   
\mathscr{D}\left(\mu_{t_0:t_N}^{\Theta_r}, \mu_{t_0:t_N}^{\Theta_0}\right) \geq \epsilon ,
\end{equation}
where $\mu_{t_0: t_N}^{\Theta_r}$ and $\mu_{t_0: t_N}^{\Theta_0}$ are the joint probability measures associated with joint densities $p_{t_0:t_N}^{\Theta}$ and $p_{t_0:t_N}^{\Theta_0}$ over the horizon $\{t_0, \ldots, t_N\}$ under fault profile $\Theta_r$ and nominal profile $\Theta_0$, respectively, and $\epsilon > 0$.
\end{defn}
\begin{defn}[$\epsilon$-Identifiability]
Fix any statistical distance metric $\mathscr{D}:\mathscr{P}_2(\mathbb{R}^{n(N+1)}) \times \mathscr{P}_2(\mathbb{R}^{n(N+1)}) \to \mathbb{R}_{\geq 0}$. For any two fault profiles $\Theta_\mathcal{i}, \Theta_\mathcal{j} \in \mathcal{F}$, where $\Theta_\mathcal{i} \neq \Theta_\mathcal{j}$ and $\mathcal{i},\mathcal{j}\in\{1,\ldots, N_f\}$, the fault profiles are $\epsilon$-identifiable on the horizon $\{t_0, \ldots, t_N\}$ if
\begin{equation}   
\mathscr{D}\left(\mu_{t_0:t_N}^{\Theta_\mathcal{i}}, \mu_{t_0:t_N}^{\Theta_\mathcal{j}}\right) \geq \epsilon ,
\end{equation}
where $\mu_{t_0: t_N}^{\Theta_\mathcal{i}}$ and $\mu_{t_0: t_N}^{\Theta_\mathcal{j}}$ are the joint probability measures associated with joint densities $p_{t_0:t_N}^{\Theta_\mathcal{i}}$ and $p_{t_0:t_N}^{\Theta_\mathcal{j}}$ over the horizon $\{t_0, \ldots, t_N\}$ under fault profile $\Theta_\mathcal{i}$ and $\Theta_\mathcal{j}$, respectively, and $\epsilon > 0$.
\end{defn}

Since there are various distance metrics defined on the space of probability measures, the choice of metric $\mathscr{D}$ can be made based on computational tractability and application requirements. In this work, we consider the 2-Wasserstein distance (see Definition~\ref{def:2w}) as our statistical distance metric $\mathscr{D}$ and compare it to other metrics like the Kullback-Leibler divergence. For the specific choice of 2-Wasserstein distance, one can quantify $\epsilon$ by performing the contraction analysis between \eqref{eq:full_system_deterministic} under $\Theta_0$ and~\eqref{eq:full_system_stochastic} when a fault profile $\Theta_r$ is active. The latter is investigated in the next section.

\section{Contraction Theory in Density Space}
Contraction theory is a tool for analyzing the incremental stability of nonlinear systems by studying the convergence of neighboring trajectories. In this section, we first review contraction analysis for deterministic systems and stochastic systems. Consequently, we extend the contraction analysis to distributions in the 2-Wasserstein metric. 
\subsection{Contraction of Trajectories}
Consider the closed-loop dynamics of~\eqref{eq:full_system_deterministic} under nominal profile $\Theta_0$, i.e.,
\begin{equation}
\label{eq:diff_ode}    
\dot{\xb} = \mathbf{F}_{\Theta_0}(\xb,t), \ \delta \dot{\xb} = \frac{\partial \mathbf{F}_{\Theta_0}}{\partial \mathbf{x}} \delta \mathbf{x}, \ \xb(0) = \xb_0,
\end{equation}
where $\delta\xb:=\xb-\xb_d$ is the virtual displacement of any two trajectories, $\xb_d$ is some desired trajectory, and $\mathbf{F}_{\Theta_0}(\xb,t)$ is the closed-loop vector field under some nominal controller $\ub=\ub_{cl}$. The dynamics~\eqref{eq:diff_ode} is contracting if all trajectories converge exponentially to each other (see \cite{Lohmiller1998} for more details). 
\begin{lem}[Deterministic Contraction]
\label{lem:deterministic_contraction}
System~\eqref{eq:diff_ode} is contracting (i.e., all the solution trajectories exponentially converge to a single trajectory globally from any initial condition), if there exists a uniformly positive definite metric $\Mb(\xb, t)=\mathbf{H}(\xb, t)^{\top} \mathbf{H}(\xb, t), \ \Mb(\xb, t) \succ 0, \forall \xb, t$, with a smooth coordinate transformation of the virtual displacement $\delta \zb=\mathbf{H}(\xb, t) \delta \xb$, such that
\begin{equation}
\dot{\Mb}(\xb, t)+2 \operatorname{sym}\left(\Mb(\xb, t) \frac{\partial \mathbf{F}_{\Theta_0}}{\partial \xb}\right) \preceq-2 \alpha \Mb(\xb, t), \quad \forall \xb, t,
\end{equation}
where $\alpha>0$. If \eqref{eq:diff_ode} is contracting, one has
\begin{equation}
\|\delta \zb(t)\|=\|\mathbf{H}(\xb, t) \delta \xb(t)\| \leq\|\delta \zb(0)\| e^{-\alpha t} .
\end{equation}
\begin{pf}
    See~\cite{Lohmiller1998}.\qed
\end{pf}
\end{lem}

Similarly, the closed-loop perturbed system \eqref{eq:diff_ode} under the fault profile $\Theta_0$ and a stochastic disturbance is described by
\begin{equation}
\label{eq:sde}
d \mathbf{x}=\mathbf{F}_{\Theta_0}\left(\mathbf{x}, t\right) d t+\sigma\left(\mathbf{x}, t\right) d \mathscr{W}(t), \ \xb(0) = \xb_0,
\end{equation}
where we utilize the same nominal controller $\ub=\ub_{cl}$. Considering \eqref{eq:lipschitz_growth_conditions}, there exists a unique strong solution to~\eqref{eq:sde}. Suppose $\mathbf{a}(t), \mathbf{b}(t)$ are two solutions to~\eqref{eq:sde} driven by independent Wiener processes $\mathscr{W}^{(1)}(t), \ \mathscr{W}^{(2)}(t)$, starting from initial conditions independent of the noise with initial distributions $\mathbf{a}(0)\sim\mu_0$ and $\mathbf{b}(0) \sim \nu_0$, respectively. One has
\begin{equation*}
d \mathbf{z}=\left[\begin{array}{l}\mathbf{F}_{\Theta_0}\left(\mathbf{a}, t\right) \\ \mathbf{F}_{\Theta_0}\left(\mathbf{b}, t\right)\end{array}\right] d t+\left[\begin{array}{cc}\sigma_1\left(\mathbf{a}, t\right) & 0 \\ 0 & \sigma_2\left(\mathbf{b}, t\right)\end{array}\right]\left[\begin{array}{l}d \mathscr{W}^{(1)} \\ d \mathscr{W}^{(2)}\end{array}\right],
\end{equation*}
where $\zb(t) = \left[\mathbf{a}(t)^\top, \mathbf{b}(t)^\top\right]^\top \in \R^{2n}$, and $\mu_t$ and $\nu_t$ are the marginals of $\mathbf{a}(t)$ and $\mathbf{b}(t)$, respectively. Consider the state-dependent Riemannian metric $\Mb(\xb(\lambda,t), t)$ for $\lambda \in [0,1]$, where $\xb(\lambda,t)$ is a geodesic curve connecting $\mathbf{a}(t)$ and $\mathbf{b}(t)$ such that $\xb(0,t) = \mathbf{a}(t)$ and $\xb(1,t) = \mathbf{b}(t)$. Moreover, $\sigma_1(\mathbf{a}, t)$ and $\sigma_2(\mathbf{b}, t)$ are defined as $\sigma(\xb(0, t), t) = \sigma_1(\mathbf{a}, t)$ and $\sigma(\xb(1,t), t) = \sigma_2(\mathbf{b}, t)$.

\subsection{Contraction of Distributions} 
Contraction theory provides formal guarantees about the behavior of states of a nonlinear system. However, to analyze the distribution of states, a metric that relates distances in a distributional sense must be employed. To achieve this, we consider the $2$-Wasserstein distance between $\mu_t$ and $\nu_t$ as it captures the underlying geometry of the probability flow, remains finite even for non-overlapping supports, and captures differences in location, spread, and shape that KL-type divergences fail to reflect (see~\cite{arjovsky2017wasserstein,villani2008optimal}). 

\begin{defn}[2-Wasserstein Distance]\label{def:2w}
Let $\mathscr{P}_2(\mathbb{R}^n)$ be the space of probability measures on $\mathbb{R}^n$ with finite second moments. Let $\Pi(\mu, \nu)$ be the set of the joint probability measures $\pi$ with marginals $\mu$ and $\nu$. The $2$-Wasserstein metric between two measures $\mu, \nu \in \mathscr{P}_2(\mathbb{R}^n)$ is defined as
\begin{equation*}
\begin{split}
\mathbb{W}_2(\mu, \nu) &= \left( \inf_{\pi \in \Pi(\mu, \nu)} \int_{\mathbb{R}^d \times \mathbb{R}^d} \|\xb-\yb\|^2 \, d\pi(\xb,\yb) \right)^{1/2} \\
&= \left(\inf_{\pi \in \Pi(\mu, \nu)} \mathbb{E}_{\pi}[\|\xb-\yb\|^2]\right)^{1/2},
\end{split}
\end{equation*} 
such that $(\mathbf{x},\mathbf{y}) \sim \pi \in \Pi(\mu,\nu)$ and the infimum is taken over the set of all joint measures on $\R^n \times \R^n$ with marginals $\mu$ and $\nu$. 
\end{defn}

\begin{thm}[Wasserstein Contraction]
\label{thm:wasserstein_contraction}
Suppose that there exist bounded positive constants $\underline{m}, \overline{m}, g_1, g_2, \overline{m}_\xb$, and $\overline{m}_{\xb^2}$, such that $\underline{m} \leq\|\Mb(\xb, t)\| \leq \overline{m}$, $\left\|\sigma_1(\xb, t)\right\|_F \leq~g_1$, $ \left\|\sigma_2(\xb, t)\right\|_F \leq g_2,\ \left\|\partial\left(\Mb_{i j}\right) / \partial \xb\right\| \leq \overline{m}_\xb$, and 
$$\left\|\partial^2\left(\Mb_{i j}\right) / \partial \xb^2\right\| \leq \overline{m}_{\xb^2}, \forall \xb, t.$$ 
Suppose that the hypothesis of Lemma~\ref{lem:deterministic_contraction} holds, i.e., the deterministic system~\eqref{eq:diff_ode} is contracting. Consider the generalized squared length with respect to a Riemannian metric $\Mb(\xb(\lambda, t), t)$ defined by
\begin{equation}
V(\xb, \delta \xb, t)=\delta \xb^\top \Mb(\xb, t) \delta \xb=\int_0^1 \frac{\partial \xb}{\partial \lambda}^{\top} \Mb(\xb, t) \frac{\partial \xb}{\partial \lambda} d \lambda ,
\end{equation}
where $\delta \mathbf{x} = \mathbf{a} - \mathbf{b}$ is the virtual displacement, such that $\underline{m}\|\mathbf{a}-\mathbf{b}\|^2 \leq V(\mathbf{x}, \delta \xb, t) \leq \overline{m}\|\mathbf{a}-\mathbf{b}\|^2$ holds. One has
\begin{equation}
\label{eq:mean_squared}
    \mathbb{E}\left[\left\|\mathbf{a}(t)-\mathbf{b}(t)\right\|^2\right] \leq \frac{C_c}{2 \gamma_1}+\frac{1}{\underline{m}}\mathbb{E}\left[V\left(\xb(0), \delta \xb(0), 0\right)\right] e^{-2 \gamma_1 t} ,
\end{equation}
for $\gamma_1=\alpha-\left(\left(g_1^2+g_2^2\right) / 2 \underline{m}\right)\left(\varepsilon_c \overline{m}_\xb+\overline{m}_{\xb^2} / 2\right)$ and $C_c= \left(\overline{m} / \underline{m}+\overline{m}_\xb /\left(\varepsilon_c \underline{m}\right)\right)\left(g_1^2+g_2^2\right)$, where $\alpha$ is the contraction rate of~\eqref{eq:diff_ode} and $\epsilon_c >0$ is a constant. In the 2-Wasserstein metric, one has
\begin{equation}
    \mathbb{W}_2^2(\mu_t, \nu_t) \leq \frac{C_c}{2 \gamma_1}+\frac{\overline{m}}{\underline{m}}\mathbb{W}_2^2(\mu_0, \nu_0) e^{-2 \gamma_1 t} .
\end{equation}
\end{thm}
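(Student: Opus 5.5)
This is the standard stochastic contraction argument of Pham/Tsukamoto–Chung, lifted to the measure level by choosing the coupling of the initial conditions. The plan has three steps: derive a differential inequality for $\mathbb{E}[V]$ via Itô's formula, integrate it, and then optimize over couplings.

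\textbf{Step 1: Itô's formula on $V$.} I will apply Itô's formula to $V(\xb,\delta\xb,t)$ along the joint process $\zb(t)=[\mathbf{a}^\top,\mathbf{b}^\top]^\top$, viewing $V$ as a function of $(\mathbf{a},\mathbf{b},t)$ through the geodesic $\xb(\lambda,t)$. Let $\mathscr{L}$ denote the infinitesimal generator of the $\zb$-dynamics. Then
\[
\mathscr{L}V=\frac{\partial V}{\partial t}+\sum_{i}\frac{\partial V}{\partial z_i}\dot z_i+\frac12\operatorname{tr}\Big(\operatorname{diag}(\sigma_1,\sigma_2)^\top\frac{\partial^2 V}{\partial \zb^2}\operatorname{diag}(\sigma_1,\sigma_2)\Big).
\]
\begin{itemize}
\item The first-order (drift) part reproduces the deterministic variation $\int_0^1 \partial_\lambda\xb^\top(\dot\Mb+2\operatorname{sym}(\Mb\,\partial\mathbf{F}_{\Theta_0}/\partial\xb))\partial_\lambda\xb\,d\lambda$. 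By the hypothesis of Lemma~\ref{lem:deterministic_contraction}, this is bounded by $-2\alpha V$.
\item For the second-order part I will differentiate $V=\int_0^1\partial_\lambda\xb^\top\Mb\,\partial_\lambda\xb\,d\lambda$ twice with respect to the endpoints $\mathbf{a},\mathbf{b}$. This produces three kinds of terms: ones involving $\Mb$ itself (bounded by $\overline{m}$), cross terms involving $\partial\Mb_{ij}/\partial\xb$ times $\|\delta\xb\|$ (bounded by $\overline{m}_\xb$), and terms with $\partial^2\Mb_{ij}/\partial\xb^2$ times $\|\delta\xb\|^2$ (bounded by $\overline{m}_{\xb^2}$). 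In each the noise enters through $\|\sigma_1\|_F^2+\|\sigma_2\|_F^2\le g_1^2+g_2^2$.
\item The cross terms are linear in $\|\delta\xb\|$. I will split them with Young's inequality, $2\|\delta\xb\|\le \varepsilon_c\|\delta\xb\|^2+1/\varepsilon_c$.
\item Finally, converting $\|\delta\xb\|^2\le V/\underline{m}$ gives
\[
\mathscr{L}V\le-2\gamma_1 V+\underline{m}\,C_c ,
\]
with $\gamma_1$ and $C_c$ exactly as stated.
\end{itemize}
The main obstacle is this Hessian computation: keeping track of how the endpoint perturbations propagate along the geodesic, so that the constants come out as $(\overline{m}/\underline{m}+\overline{m}_\xb/(\varepsilon_c\underline{m}))(g_1^2+g_2^2)$. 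I plan to follow the computation of Pham et al.\ / Tsukamoto--Chung closely rather than redo it from scratch.

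\textbf{Step 2: integrate the inequality.} Taking expectations (Dynkin's formula) is justified because the growth and Lipschitz conditions~\eqref{eq:lipschitz_growth_conditions} give finite second moments, so the local martingale term has zero mean. This yields $\frac{d}{dt}\mathbb{E}[V]\le-2\gamma_1\mathbb{E}[V]+\underline{m}C_c$. Grönwall's comparison lemma then gives
\[
\mathbb{E}[V(t)]\le \frac{\underline{m}C_c}{2\gamma_1}+\mathbb{E}[V(0)]e^{-2\gamma_1 t}.
\]
Dividing by $\underline{m}$ and using $\underline{m}\|\mathbf{a}-\mathbf{b}\|^2\le V$ gives~\eqref{eq:mean_squared}.

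\textbf{Step 3: pass to the Wasserstein bound.} The initial pair $(\mathbf{a}(0),\mathbf{b}(0))$ may be drawn from any coupling $\pi_0\in\Pi(\mu_0,\nu_0)$, still independent of the Wiener processes. For any such coupling, the law of $(\mathbf{a}(t),\mathbf{b}(t))$ lies in $\Pi(\mu_t,\nu_t)$, so by Definition~\ref{def:2w}, $\mathbb{W}_2^2(\mu_t,\nu_t)\le\mathbb{E}\|\mathbf{a}(t)-\mathbf{b}(t)\|^2$. On the initial side, $\mathbb{E}[V(0)]\le\overline{m}\,\mathbb{E}_{\pi_0}\|\mathbf{a}(0)-\mathbf{b}(0)\|^2$. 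Taking $\pi_0$ to be an optimal coupling, which exists for measures in $\mathscr{P}_2$, makes this expectation equal to $\mathbb{W}_2^2(\mu_0,\nu_0)$. Combining these with~\eqref{eq:mean_squared} gives the claimed inequality.
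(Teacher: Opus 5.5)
Your proposal is correct and follows essentially the same route as the paper: import the generator bound $\mathscr{L}V \le -2\gamma_1 V + \underline{m}C_c$ from the Pham/Tsukamoto--Chung computation, pass to expectations with Dynkin's formula, and then take the infimum over initial couplings (your choice of an optimal coupling), using $\mathbb{W}_2^2(\mu_t,\nu_t)\le\mathbb{E}\|\mathbf{a}(t)-\mathbf{b}(t)\|^2$ and $\mathbb{E}[V(0)]\le\overline{m}\,\mathbb{E}\|\mathbf{a}(0)-\mathbf{b}(0)\|^2$. The details you add are consistent with the paper's sketch and slightly more explicit: the Young splitting reproduces $C_c$ and $\gamma_1$ exactly, you justify the martingale term, and you write out the Gr\"onwall step, which tacitly needs $\gamma_1>0$ to drop the factor $1-e^{-2\gamma_1 t}$ — an assumption the paper also leaves implicit.
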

\begin{pf}
Following the computation of the infinitesimal generator $\mathscr{L}$ of $V$, we have $\mathscr{L}V \leq -2\gamma_1 V + \underline{m}C_c$. Taking the expectations of both sides with Dynkin's formula yields the desired mean-squared bound. From the upper bounds $V(\mathbf{x}, \delta \xb, t) \leq \overline{m}\|\mathbf{a}-\mathbf{b}\|^2$ and $\mathbb{E}[V(\xb(0), \delta \xb(0), 0)] \leq \overline{m}\mathbb{E}[\|\mathbf{a}(0) - \mathbf{b}(0)\|^2]$ the mean-squared bound in~\eqref{eq:mean_squared} can be derived as
\begin{equation}
    \mathbb{E}\left[\left\|\mathbf{a}(t)-\mathbf{b}(t)\right\|^2\right] \leq \frac{C_c}{2 \gamma_1}+\frac{\overline{m}}{\underline{m}}\mathbb{E}\left[\|\mathbf{a}(0) - \mathbf{b}(0)\|^2\right] e^{-2 \gamma_1 t}.
\end{equation}
See~\cite{tsukamoto2021contraction} for further details. Since $\mathbb{W}_2^2(\mu_t, \nu_t) \leq \mathbb{E}\left[\left\|\mathbf{a}(t)-\mathbf{b}(t)\right\|^2\right]$ holds for any $(\mathbf{a}(0),\mathbf{b}(0))$ with marginals $\mu_0$ and $\nu_0$, taking the infimum over all such couplings yields the desired bound for the 2-Wasserstein (squared) distance as
\begin{equation*}
\mathbb{W}_2^2(\mu_t, \nu_t) \leq \frac{C_c}{2 \gamma_1}+\frac{\overline{m}}{\underline{m}}e^{-2 \gamma_1 t}\mathbb{W}_2^2(\mu_0, \nu_0) . \qed
\end{equation*} 
\end{pf}

By choosing $\nu_0$ equal to the stationary distribution of~\eqref{eq:sde}, we see that Theorem~\ref{thm:wasserstein_contraction} suggests exponentially fast convergence of $\mu_t$ to a ball with the radius $\sqrt{\frac{C_c}{2 \gamma_1}}$.

\begin{thm}[Wasserstein FDI]
\label{thm:wasserstein_fdi}
Let the nominal system be described by~\eqref{eq:full_system_deterministic} as $\dot{\xb} = \mathbf{F}_{\Theta_0}(\xb,t)$ and $\Theta_0 = (\mathbf{1}_m, \mathbf{1}_p)$ which we will denote by trajectory $\mathbf{a}(t)$. Let the perturbed system be described by~\eqref{eq:full_system_stochastic}  as $d \mathbf{x} = \mathbf{F}_{\Theta_r}(\xb,t) d t + \sigma(\mathbf{x}, t) d \mathscr{W}_1(t)$ and $\Theta_r \neq (\mathbf{1}_m, \mathbf{1}_p)$ which we will denote with trajectory $\mathbf{b}(t)$, i.e.,
\begin{equation*}
d \mathbf{z}=\left[\begin{array}{c}
f(\mathbf{a}, t) + G(\mathbf{a},t)\ub \\
f(\mathbf{b}, t) + G(\mathbf{b},t)\Xi\ub
\end{array}\right] d t+\left[\begin{array}{cc}
0 \\
\sigma_2(\mathbf{b}, t)
\end{array}\right]\left[\begin{array}{l}
0 \\
d \mathscr{W}
\end{array}\right] ,
\end{equation*}
where $\mathbf{z}(t)=\left[\mathbf{a}(t)^{\top}, \mathbf{b}(t)^{\top}\right]^{\top} \in \mathbb{R}^{2 n}$. Moreover, $\mu_t$ and $\nu_t$ are the measures of $\mathbf{a}(t)$ and $\mathbf{b}(t)$, respectively. The state-dependent Riemannian metric $\mathbf{M}(\mathbf{x}(\lambda, t), t)$ for $\lambda \in[0,1]$ and $\mathbf{x}(\lambda, t)$ is a geodesic curve connecting $\mathbf{a}(t)$ and $\mathbf{b}(t)$ such that $\mathbf{x}(0, t)=\mathbf{a}(t)$ and $\mathbf{x}(1, t)=\mathbf{b}(t)$ with
\begin{equation*}
    V(\mathbf{x}, \delta \mathbf{x}, t)=\delta \mathbf{x}^{\top} M(\mathbf{x}, t) \delta \mathbf{x}=\int_0^1 \frac{\partial \mathbf{x}^{\top}}{\partial \lambda} M(\mathbf{x}(\lambda, t), t) \frac{\partial \mathbf{x}}{\partial \lambda} d \lambda.
\end{equation*} 
Let $\underline{m} \leq\|\Mb(\xb, t)\| \leq \overline{m}$,  $\|\sigma_2(\xb, t)\|_F \leq g_2$, $\left\|\partial \Mb_{i j} / \partial \xb\right\| \leq \overline{m}_\xb$, $\left\|\partial^2 \Mb_{i j} / \partial \xb^2\right\| \leq \overline{m}_{\xb^2}$ as per Theorem~\ref{thm:wasserstein_contraction}, but in addition uniform bounds on $\|G(\xb, t)\| \leq \bar{G}$, $\|\mathbf{u}(t)\| \leq \bar{\ub}$, and $\|\Xi-I\| \leq \bar{\Delta}$. One has
\begin{equation}\label{eq:W2_FDI}
\mathbb{W}_2^2\left(\mu_t, \nu_t\right) \leq \frac{\tilde C_c}{2 \tilde\gamma_1}+\frac{\tilde C_d}{2\tilde\gamma_1 \underline{m}}+\frac{\overline{m}}{\underline{m}}  \mathbb{W}_2^2\left(\mu_0, \nu_0\right) e^{-2 \tilde\gamma_1 t},
\end{equation}
where $\tilde \gamma_1=\tilde \alpha-\epsilon_f/4 - \left( g_2^2 / 2 \underline{m}\right)\left(\varepsilon_c \overline{m}_{\mathbf{x}}+\overline{m}_{\mathbf{x}^2} / 2\right)$ is a positive constant, $\tilde C_d=2\overline{m}^2\bar{G}^2\bar{\Delta}^2\bar{\mathbf{u}}^2/(\varepsilon_f\underline{m})>0$ is a constant (a function of $\bar{G}$, $\bar{\Delta}$, and $\bar{\ub}$), $\tilde\alpha$ is the contraction rate of $\dot{\xb} = f(\xb,t) + G(\xb,t)\ub$, and $$\tilde C_c= \left(\overline{m} / \underline{m}+\overline{m}_{\mathbf{x}} /\left(\varepsilon_c \underline{m}\right)\right)\left(g_2^2\right).$$
\end{thm}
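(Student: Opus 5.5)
We follow the proof of Theorem~\ref{thm:wasserstein_contraction}: apply the infinitesimal generator $\mathscr{L}$ to the Riemannian squared length $V$ along the coupled process $\zb=[\mathbf{a}^\top,\mathbf{b}^\top]^\top$, integrate the resulting differential inequality with Dynkin's formula, and pass from mean-square distances to $\mathbb{W}_2$ by taking the infimum over initial couplings. The new ingredient compared with Theorem~\ref{thm:wasserstein_contraction} is that the two components have different drifts. We split the drift of $\mathbf{b}$ as
\begin{equation*}
f(\mathbf{b},t)+G(\mathbf{b},t)\Xi\ub = \bigl(f(\mathbf{b},t)+G(\mathbf{b},t)\ub\bigr) + G(\mathbf{b},t)(\Xi-I)\ub .
\end{equation*}
The first part is exactly the contracting nominal field, with rate $\tilde\alpha$. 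The second part is a bounded mismatch term, $\|G(\Xi-I)\ub\|\le \bar G\bar\Delta\bar{\ub}$. The noise acts only on $\mathbf{b}$, so only $g_2$ appears, and $g_1=0$.

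\textbf{Step 1 (nominal part and diffusion).} Along the geodesic $\xb(\lambda,t)$, the nominal-drift contribution to $\mathscr{L}V$ is handled exactly as in Lemma~\ref{lem:deterministic_contraction} and Theorem~\ref{thm:wasserstein_contraction}. It gives $\le -2\tilde\alpha V$. The second-order Itô terms, which involve $\partial \Mb_{ij}/\partial\xb$, $\partial^2\Mb_{ij}/\partial\xb^2$ and $\|\sigma_2\|_F\le g_2$, are treated verbatim as in \cite{tsukamoto2021contraction}. Using Young's inequality with parameter $\varepsilon_c$, they contribute
\begin{equation*}
+\frac{g_2^2}{\underline m}\Bigl(\varepsilon_c\overline m_\xb+\frac{\overline m_{\xb^2}}{2}\Bigr)V+\underline m\tilde C_c .
\end{equation*}

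\textbf{Step 2 (fault mismatch term).} The extra drift enters $\mathscr{L}V$ as a cross term of the form $2\,\delta\xb^\top \Mb\, G(\mathbf{b})(\Xi-I)\ub$, up to the endpoint evaluation of the geodesic. Its magnitude is at most $2\overline m\|\delta\xb\|\,\bar G\bar\Delta\bar{\ub}$. We apply Young's inequality with parameter $\varepsilon_f$:
\begin{equation*}
2\overline m\|\delta\xb\|\bar G\bar\Delta\bar{\ub}\le \frac{\varepsilon_f}{2}\|\delta\xb\|^2+\frac{2\overline m^2\bar G^2\bar\Delta^2\bar{\ub}^2}{\varepsilon_f}.
\end{equation*}
Then $\|\delta\xb\|^2\le V/\underline m$ converts the first term into a multiple of $V$; with the paper's normalization this is absorbed as the $-\varepsilon_f/4$ in the rate. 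The constant term is $\tilde C_d$ up to the factor $1/\underline m$ used in its definition. Collecting terms gives
\begin{equation*}
\mathscr{L}V\le -2\tilde\gamma_1 V+\underline m\tilde C_c+\tilde C_d .
\end{equation*}
Getting the exact constants ($\epsilon_f/4$, and the placement of $\underline m$ in $\tilde C_d$) to match the statement is the bookkeeping obstacle. We would fix the Young's weights a posteriori so that they match, and we assume $\tilde\gamma_1>0$ as stated.

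\textbf{Step 3 (integration and Wasserstein).} Dynkin's formula and Gronwall's inequality give
\begin{equation*}
\mathbb{E}V(t)\le \frac{\underline m\tilde C_c+\tilde C_d}{2\tilde\gamma_1}+\mathbb{E}V(0)e^{-2\tilde\gamma_1 t}.
\end{equation*}
A technical point here is localization by stopping times so that the martingale term vanishes; this follows from the growth conditions \eqref{eq:lipschitz_growth_conditions}. Dividing by $\underline m$ via $\underline m\|\mathbf{a}-\mathbf{b}\|^2\le V$, and using $\mathbb{E}V(0)\le\overline m\,\mathbb{E}\|\mathbf{a}(0)-\mathbf{b}(0)\|^2$, gives the mean-square bound. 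Since $\mathbb{W}_2^2(\mu_t,\nu_t)\le\mathbb{E}\|\mathbf{a}(t)-\mathbf{b}(t)\|^2$ for every initial coupling, taking the infimum over couplings of $(\mu_0,\nu_0)$ yields \eqref{eq:W2_FDI}.
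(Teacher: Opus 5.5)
Your proposal is correct and follows essentially the same route as the paper: you split the drift into the contracting nominal field plus the mismatch $G(\mathbf{b},t)(\Xi-I)\ub$, apply Young's inequality to the resulting cross term, take the noise estimate from \cite{tsukamoto2021contraction} with $g_1=0$, and finish with Dynkin's formula and the infimum over initial couplings. The constant bookkeeping you flag is resolved exactly by using Young's weight $\underline{m}\varepsilon_f/2$ on $\|\delta\xb\|^2$, or equivalently, as the paper does, by substituting $\|\delta\xb\|\le\sqrt{V/\underline{m}}$ before applying Young; either way the mismatch term is bounded by precisely $\frac{\varepsilon_f}{2}V+\tilde C_d$.
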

\begin{pf}
To compute the infinitesimal generator $\mathscr{L}V$ for the coupled system, we decompose the drift term as:
\begin{multline*}
\frac{\partial V}{\partial \mathbf{a}} \cdot \mathbf{F}_{\Theta_0}(\mathbf{a}, t) + \frac{\partial V}{\partial \mathbf{b}} \cdot \mathbf{F}_{\Theta_r}(\mathbf{b}, t) = \\
\frac{\partial V}{\partial \mathbf{a}} \cdot \mathbf{F}_{\Theta_0}(\mathbf{a}, t) + \frac{\partial V}{\partial \mathbf{b}} \cdot \mathbf{F}_{\Theta_0}(\mathbf{b}, t) + \frac{\partial V}{\partial \mathbf{b}} \cdot [\mathbf{F}_{\Theta_r}(\mathbf{b}, t) - \mathbf{F}_{\Theta_0}(\mathbf{b}, t)].
\end{multline*}
The nominal dynamics contribution yields $-2\tilde{\alpha}V$ by the contraction of $\dot{\xb} = f(\xb, t)+G(\xb,t)\ub$. For the fault term, note that $\mathbf{F}_{\Theta_r}(\mathbf{b}, t) - \mathbf{F}_{\Theta_0}(\mathbf{b}, t) = G(\mathbf{b}, t)(\Xi - I)\mathbf{u}$. Using $\|\partial V/\partial \mathbf{b}\| \leq 2\overline{m}\|\delta\mathbf{x}\|$ and $\|\delta\mathbf{x}\| \leq \sqrt{V/\underline{m}}$, we have:
\begin{multline*}
\left|\frac{\partial V}{\partial \mathbf{b}} \cdot [G(\mathbf{b}, t)(\Xi-I)\mathbf{u}]\right| \leq 2\overline{m}\sqrt{\frac{V}{\underline{m}}} \cdot \bar{G}\bar{\Delta}\bar{\mathbf{u}} \\
\leq \frac{\varepsilon_f}{2} V + \frac{2\overline{m}^2\bar{G}^2\bar{\Delta}^2\bar{\mathbf{u}}^2}{\varepsilon_f\underline{m}} = \frac{\varepsilon_f}{2} V + \tilde{C}_d
\end{multline*}
by Young's inequality with $\varepsilon_f > 0$, where $\tilde{C}_d = 2\overline{m}^2\bar{G}^2\bar{\Delta}^2\bar{\mathbf{u}}^2/(\varepsilon_f\underline{m})$ and $\bar{\Delta} = \|\Xi - I\|$. The noise contribution (with $g_1 = 0$) yields:
\begin{equation*}
\frac{1}{2}\operatorname{Tr}\left[\sigma_2\sigma_2^{\top}\frac{\partial^2 V}{\partial \mathbf{b}^2}\right] \leq \frac{g_2^2}{2\underline{m}}\left(\varepsilon_c\overline{m}_{\mathbf{x}} + \frac{\overline{m}_{\mathbf{x}^2}}{2}\right)V + \underline{m}\tilde{C}_c
\end{equation*}
where $\tilde{C}_c = (\overline{m}/\underline{m} + \overline{m}_{\mathbf{x}}/(\varepsilon_c\underline{m}))g_2^2$. Combining all terms:
\begin{equation*}
\mathscr{L}V \leq -2\tilde{\gamma}_1 V + \underline{m}\tilde{C}_c + \tilde{C}_d
\end{equation*}
with $\tilde{\gamma}_1 = \tilde{\alpha} - \varepsilon_f/4 - (g_2^2/2\underline{m})(\varepsilon_c\overline{m}_{\mathbf{x}} + \overline{m}_{\mathbf{x}^2}/2) > 0$ for $\varepsilon_f > 0$. Applying Dynkin's formula and following the arguments in Theorem~\ref{thm:wasserstein_contraction} yields \eqref{eq:W2_FDI}.\qed
\end{pf}

In light of Theorem~\ref{thm:wasserstein_fdi}, for any horizon $\{t_0, \ldots, t_N\}$, one can investigate $\epsilon$-detectability between the nominal fault-free profile $\Theta_0$ and any faulty profile $\Theta_r$. The right-hand side of \eqref{eq:W2_FDI} in Theorem~\ref{thm:wasserstein_fdi} provides an upper bound on the Wasserstein separation between the nominal and faulty laws, which can be compared with empirical distances in simulations. empirical values of $\mathbb{W}_2$ strictly greater than zero then certify detectability for that specific fault scenario. In a similar manner, $\epsilon$-identifiability between any two fault profiles $\Theta_r$ and $\Theta_j$ can be established by performing the same analysis between their respective perturbed dynamics in~\eqref{eq:full_system_stochastic}. 

\section{Flow Matching on Trajectories for FDI} 
Now that we have established formal guarantees on fault detectability and identifiability via Theorem~\ref{thm:wasserstein_fdi}, we present a data-driven FDI method based on density flow matching along trajectories. Consider a set of $N_f+1$ fault profiles $\{\Theta_0, \ldots, \Theta_{N_f}\} \in \mathcal{F}$, where $\Theta_0$ is the nominal fault-free profile. Let $\xb^{\Theta_r}(t):=\{x^{\Theta_r}_{t_0}, \ldots, x^{\Theta_r}_{t_N}\} \in \R^n$ denote a trajectory of states from fault profile $\Theta_r$ for arbitrary timepoints $t_{t_0:t_N}:= \{t_0, \ldots, t_N\}$. We aim to deduce the fault profile by inferring the dynamics generating $\xb^{\Theta_r}(t)$ that best explains the observed trajectory. 

\subsection{Neural Flows}
In this subsection, we utilize flow matching to learn a vector field that transforms a simple base density into a target density. Let $\dot{\xb}^{\Theta_r} = \mathbf{F}_{\Theta_r}(\xb^{\Theta_r},t)$ from~\eqref{eq:full_system_deterministic} be the true dynamics generating the trajectory $\xb^{\Theta_r}(t)$ under fault profile $\Theta_r$. For a single timestep, let $p_0 = p(\xb, t=0)$ be a simple base density and $p_1 = p(\xb, t=1)$ be the target density from the FPE in~\eqref{eq:fokker_planck_equation}. The flow matching objective is to learn a vector field $\vb_{\boldsymbol{\theta}}(\xb,t)$ parameterized by neural network weights $\boldsymbol{\theta}$ such that the density $p(\xb,t)$ evolving under the learned dynamics matches the target density $p_1$ at time $t=1$ when initialized from $p_0$ at time $t=0$. The flow matching loss is given by
\begin{equation}
\begin{split}
\mathcal{L}_{\textrm{FM}}(\boldsymbol{\theta}) = \mathbb{E}_{t \sim \mathcal{U}(0,1), \ \xb^{\Theta_r} \sim p_t}[\|&\vb_{\boldsymbol{\theta}}(\xb, t) \\
&- \mathbf{F}_{\Theta_r}(\xb^{\Theta_r},t) \|^2 ].
\end{split}
\end{equation}
However, this loss is intractable since there is no closed-form expression for the true dynamics $\mathbf{F}_{\Theta_r}(\xb,t)$. To overcome this,~\cite{lipman2022flow, albergo2022building} propose using a conditional vector field $f(\xb^{\Theta_r}, t\mid \mathbf{c})$ where $\mathbf{c}$ is a latent variable sampled from some prior distribution $q(\mathbf{c})$, such that $p_t(\xb) = \mathbb{E}_{\mathbf{c} \sim q(\mathbf{c})}[p_t(\xb^{\Theta_r} \mid \mathbf{c})]$. In our setting, for Type 1 faults that model actuator faults with time-delayed offsets, $\mathbf{c} = [\boldsymbol{\eta}^\top, \mathbf{t}_{\textrm{start}}^{\top}]^\top$, and for Type 2 faults that model to simultaneous persistent unknown actuator and sensor faults, $\mathbf{c} = [\boldsymbol{\eta}^\top, \boldsymbol{\gamma}^{\top}]^\top$. This is known as Conditional Flow Matching (CFM) with the tractable loss
\begin{equation}
\label{eq:cfm_loss}
\mathcal{L}_{\textrm{CFM}}(\boldsymbol{\theta}) = \mathbb{E}_{t, \xb^{\Theta_r},  \mathbf{c} }\left[\left\|\vb_{\boldsymbol{\theta}}(\xb, t) - f(\xb^{\Theta_r}, t\mid \mathbf{c})\right\|^2\right],
\end{equation}
where the expectation is taken over $t \sim \mathcal{U}(0,1)$, $\xb^{\Theta_r} \sim p_t(\xb^{\Theta_r} | \mathbf{c})$, and $\mathbf{c} \sim q(\mathbf{c} )$. By minimizing this loss, as studied by ~\cite{lipman2022flow}, it can be shown that the learned vector field $\vb_{\boldsymbol{\theta}}(\xb,t)$ satisfies the FPE of~\eqref{eq:fokker_planck_equation} and thus the marginal density $p(\xb,t)$ evolving under $\vb_{\boldsymbol{\theta}}(\xb,t)$ matches the target density $p_1$ at time $t=1$. 

For each consecutive state pair $(\xb^{\Theta_r}_{t_k}, \xb^{\Theta_r}_{t_{k+1}})$ in the trajectory $\xb^{\Theta_r}(t)$, we adopt a Gaussian Bridge model (see \cite{zhang2024trajectory,albergo2022building}), where for any point $\xb_\tau$ between $\xb_{k}$ and $\xb_{k+1}$ we model the conditional density as $p(\xb_\tau \mid \xb_k, \xb_{k+1})\sim \mathcal{N}(\mu_{t_k}, \Sigma_{t_k})$ with
\begin{equation}
\begin{split}
\mu_{t_k} &= (1 - \tau)\xb^{\Theta_r}_{t_k} + \tau \xb^{\Theta_r}_{t_{k+1}}, \\
\Sigma_{t_k} &= \sigma_\mathrm{bridge}^2 \tau(1-\tau)\mathbf{I}_n,
\end{split}
\end{equation}
where $\tau \in [0,1]$, and $\sigma_\mathrm{bridge} > 0$ is a hyperparameter that controls the variance of the bridge, chosen to balance between noise regularization and trajectory fidelity. This construction also ensures that $\xb_k$ and $\xb_{k+1}$ match the dynamics~\eqref{eq:full_system_deterministic} almost surely while providing a smooth probabilistic interpolation between the two states. We parameterize a neural network $\mathcal{F}_{\boldsymbol{\theta}}: \mathbb{R}^{n_{\text {feat }}} \rightarrow \mathbb{R}^n \times \mathbb{R}^n$ that predicts the next state distribution conditioned on current and historical information. The model outputs a predicted mean and log-standard deviation
\begin{equation}
\left[\mu_{\boldsymbol{\theta}}, \log \sigma_{\boldsymbol{\theta}}\right]=\mathcal{F}_{\boldsymbol{\theta}}\left(\boldsymbol{\Phi}_{k, \tau}\right)
\end{equation}
where $\mu_{\boldsymbol{\theta}} \in \mathbb{R}^n$ and $\sigma_{\boldsymbol{\theta}} \in \mathbb{R}^n$ represent the diagonal Gaussian prediction for $\mathbf{x}_{k+1}$. Moreover, the feature vector $\boldsymbol{\Phi}_{k, \tau}$ at time step $k$ and interpolation parameter $\tau$ is constructed as
\begin{equation}
\boldsymbol{\Phi}_{k, \tau}=\left[t_k / t_N, \tau, \mathbf{x}_\tau^{\top}, \mathbf{x}_{k-1}^{\top}, \ldots, \mathbf{x}_{k-M}^{\top}, \mathbf{c}^\top\right]^{\top} ,
\end{equation}
where $t_k / t_N$ is the normalized absolute time, $\tau$ is the local interpolation time, $\xb_\tau$ is sampled from the Gaussian bridge, and $M_\mathrm{memory} \geq 1$ is the memory horizon. 

To effectively incorporate the fault conditioning vector $\mathbf{c}$ into the neural network, we utilize Feature-wise Linear Modulation (FiLM) layers developed by \cite{perez2018film} that apply affine transformations to intermediate feature maps based on $\mathbf{c}$. For a hidden activation $\mathbf{h} \in \mathbb{R}^{d_h}$ FiLM computes
\begin{equation}
\operatorname{FiLM}(\mathbf{h} \mid \mathbf{c})=\gamma(\mathbf{c}) \odot \mathbf{h}+\boldsymbol{\beta}(\mathbf{c}),
\end{equation}
where $\boldsymbol{\gamma}: \mathbb{R}^{d_c} \rightarrow \mathbb{R}^{d_h}$ and $\boldsymbol{\beta}: \mathbb{R}^{d_c} \rightarrow \mathbb{R}^{d_h}$ are learned neural networks that map the fault parameters to scale and shift vectors, respectively. This modulation allows the network to adapt its internal representations dynamically according to the fault characteristics encoded in $\mathbf{c}$. 

Since $\gamma(\mathbf{c})$ and $\beta(\mathbf{c})$ vary smoothly with $\mathbf{c}$, the model can interpolate between fault profiles during inference. This means that for any intermediate fault parameter $\mathbf{c}^{\boldsymbol{*}}=\lambda \mathbf{c}^{(\mathcal{i})}+(1-\lambda) \mathbf{c}^{(\mathcal{j})}$ with $\lambda \in[0,1]$, the network can generalize to unseen fault combinations that were not explicitly present in the training set. Consequently, we can train on a selected subset of fault profiles while maintaining generalizability across the continuous fault parameter space $\mathcal{F}=[0,1]^m \times[0,1]^p$.

\subsection{Neural Network Training and Fault Estimation}
Since we approximate each density by a Gaussian Bridge between consecutive states in the trajectory, the CFM loss in~\eqref{eq:cfm_loss} reduces to minimizing the negative log-likelihood (NLL) loss over a dataset of trajectories $\mathcal{D}=\left\{\left(\mathbf{x}^{(\Theta_j)}, \mathbf{c}^{(j)}\right)\right\}_{j=1}^{N_{\text {train }}}$. Thus, the loss function is
\begin{equation}
\mathcal{L}_{\mathrm{NLL}}(\boldsymbol{\theta})=\mathbb{E}\left[-\log p_{\boldsymbol{\theta}}\left(\xb_{k+1} \mid \xb_k, \mathbf{c}\right)\right].
\end{equation}
Since $p_{\boldsymbol{\theta}}\left(\xb_{k+1} \mid \xb_k, \mathbf{c}\right)$ is Gaussian with mean $\mu_{\boldsymbol{\theta}}$ and diagonal covariance $\Sigma_{\boldsymbol{\theta}} = \mathrm{diag}(\sigma_{\boldsymbol{\theta}}^2)$, the negative log-likelihood  loss can be derived as
\begin{equation}
\mathcal{L}_{\mathrm{NLL}}(\boldsymbol{\theta})=\mathbb{E}\left[\frac{1}{2} \sum_{\ell=1}^{n}\left(\frac{\left(\xb_{k+1}^{(\ell)}-\mu_{\boldsymbol{\theta}}^{(\ell)}\right)^2}{\sigma_{\boldsymbol{\theta}}^{(\ell) 2}}+\log \sigma_{\boldsymbol{\theta}}^{(\ell) 2}\right)\right].
\end{equation}
We also include a mean-squared error (MSE) regularizer to stabilize learning. Hence, the total loss function can be expressed by
\begin{equation}
\mathcal{L}(\boldsymbol{\theta})= \mathcal{L}_{\mathrm{NLL}}(\boldsymbol{\theta})+\lambda_{\mathrm{MSE}} \left\|\xb_{k+1}-\mu_{\boldsymbol{\theta}}\right\|^2 ,
\end{equation}
where $\lambda_{\mathrm{MSE}} > 0$ is a hyperparameter. 

Once the neural network is deployed, given an observed trajectory $\xb^{\mathrm{obs}}(t) = \{\xb_0^{\mathrm{obs}}, \ldots, \xb_K^{\mathrm{obs}}\}$, we evaluate the trained model across all fault profiles $\{\Theta_0, \ldots, \Theta_{N_f}\}$ by computing the total negative log-likelihood of the observed trajectory under each fault profile's conditioning vector $\mathbf{c}^{(r)}$. The fault profile that minimizes this negative log-likelihood is selected as the inferred fault, i.e., we find 
\begin{equation}
\hat{\Theta} = \arg \min_{\Theta} \sum_{k=0}^{K-1} -\log p_{\boldsymbol{\theta}}\left(\xb_{k+1}^{\mathrm{obs}} \mid \xb_k^{\mathrm{obs}}, \mathbf{c}^{(r)}\right),
\end{equation}
where each $\Theta_r$ corresponds to the conditioning vector $\mathbf{c}^{(r)}$. 

Together, these components establish a unified probabilistic formulation in which each fault hypothesis induces a distinct, parameterized density flow in which observed trajectories can be evaluated against. Because the mapping from fault parameters to density dynamics is continuous, the framework generalizes across the entire space of faults without requiring exhaustive training, enabling accurate detection and estimation across a wide range of actuator and sensor fault scenarios.
\section{Numerical Case Study}
We demonstrate the effectiveness of our FDI approach on a spacecraft attitude control system with four reaction wheels in a tetrahedral configuration. The system state is $x=\left[\theta^{\top}, \omega^{\top}, \omega_w^{\top}\right]^{\top} \in \mathbb{R}^{10}$ containing Euler angles (i.e., roll, pitch, and yaw), body angular velocities, and wheel speeds. We use the parameters outlined in~\cite{lee2017geometric} as follows: the spacecraft has inertia $I= \operatorname{diag}(1.0,1.0,0.8) \mathrm{kg} \cdot \mathrm{m}^2$, wheel inertia $J_w=0.01 \mathrm{~kg} \cdot \mathrm{~m}^2$, and uses a PD controller with gains $K_p=\operatorname{diag}(22.5,18.0,15.0)$ and $K_d=\operatorname{diag}(12.0,9.0,7.5)$ to track sinusoidal references over $T=60$ s horizons and a sampling time of $\Delta t=0.02 \mathrm{~s}$. The nominal tracking controller is designed to follow the reference trajectory $\theta_d(t)= [0.05 \sin (0.2 \pi t), 0.05 \cos (0.2 \pi t),(\pi / 250) t]^{\top}$ via the control law $u_{\text {nom }}=-K_p(\theta- \left.\theta_d\right)-K_d \omega$, which is mapped to wheel torques through $u_w=\operatorname{sat}\left(A^{\dagger} u_{\text {nom }}, 0.14\right)$, where $\operatorname{sat}(\cdot)$ denotes the saturation function, i.e., the maximum generated torque is $0.14\mathrm{~N.m}$. 

We consider two fault scenarios: Type 1 investigates actuator faults with time-delayed offsets through effectiveness parameters $\eta_i \in[0,1]$ for each wheel. Type 2 extends to persistent simultaneous actuator and sensor faults by adding sensor multiplicative fault parameters $\gamma_j \in[0,1]$ for seven measurements: three Euler angle sensors $\gamma_{1: 3}$  and four wheel speed sensors $\gamma_{4: 7}$, where the controller receives faulty measurements $\tilde{y}_\theta=\Gamma_\theta \theta$ with $\Gamma_\theta=\operatorname{diag}\left(\gamma_1, \gamma_2, \gamma_3\right)$ while gyroscopes are assumed to be fault-free. For Type 1 faults, each wheel transitions from nominal to degraded at randomly sampled times $t_{\text {start}, i} \sim \operatorname{Uniform}(8,42)$. Fault parameters are sampled from $\eta_i \sim \operatorname{Beta}(0.7,0.7)$ and $\gamma_j \sim \operatorname{Beta}(1.0,1.0)$ with $30-35 \%$ probability of nominal operation per channel. We generate $N_{\text {train }}=1000$ training and $N_{\text {val }}=200$ validation trajectories with additive Gaussian noise $(\sigma=0.001-0.002)$ plus initial condition uncertainty $\left(\sigma_0=0.01\right)$. 

Our conditional flow matching network $\mathcal{F}_{\boldsymbol{\theta}}$ takes input features $[\,t_k/t_N,\, \tau,\, \xb_\tau,\, \xb_{k-1:k-4},\, \mathbf{c}\,]$ where $\mathbf{c}$ adapts to the fault scenario.  For Type 1 time-delayed faults, $\mathbf{c}=\left[\boldsymbol{\eta}^{\top},\left(\mathbf{t}_{\text {start}} / t_N\right)^{\top}\right]^{\top} \in \mathbb{R}^8$ includes normalized onset times. For Type 2 persistent simultaneous actuator and sensor faults, $\mathbf{c}=\left[\boldsymbol{\eta}^{\top}, \boldsymbol{\gamma}^{\top}\right]^{\top} \in \mathbb{R}^{11}$ contains all actuator-sensor parameters. The architecture consists of two hidden layers of dimension 256 with FiLM conditioning, where each FiLM module applies learned affine transformations $h \mapsto h \cdot(1+\gamma(c))+\beta(c)$ based on fault parameters. The network outputs diagonal Gaussian parameters $\left(\mu_{\boldsymbol{\theta}}, \sigma_{\boldsymbol{\theta}}\right) \in \mathbb{R}^{10} \times \mathbb{R}^{10}$ for the next state prediction. We use a Gaussian bridge with variance $\sigma_{\text {bridge }}^2= 0.03^2$ and memory horizon $M_{\text {memory }}=4$ steps. Training uses Adam optimizer with learning rate $10^{-3}$ and batch size 256 over 15 epochs. 

For the deployed neural network $\mathcal{F}_{\boldsymbol{\theta}}$, given an observed trajectory $\xb^{\text{obs}}$ with unknown fault parameters, we estimate effectiveness by minimizing trajectory-level negative log likelihood. For Type 1 (i.e., the actuator faults scenario), we optimize 
$$\hat{\boldsymbol{\eta}}=\arg \min _{\boldsymbol{\eta}} \sum_{k=0}^{K-1}-\log p_{\boldsymbol{\theta}}\left(\xb_{k+1}^{\text {obs}} \mid \xb_k^{\text {obs}}, \mathbf{c}\right)+0.01 \sum_i(1- \left.\eta_i\right)^2$$
using Adam with 300 iterations starting from $\boldsymbol{\eta}^{(0)}=[0.95]^4$. For Type 2 simultaneous actuator and sensor faults, we jointly optimize 
\begin{equation*}
    \begin{split}
        (\hat{\boldsymbol{\eta}}, \hat{\boldsymbol{\gamma}})=& \arg \min _{\boldsymbol{\eta}, \boldsymbol{\gamma}} \sum_{k=0}^{K-1}-\log p_{\boldsymbol{\theta}}\left(\xb_{k+1}^{\text {obs}} \mid \xb_k^{\text {obs}},\mathbf{c}\right)\\
        &+ 0.01\left[\sum_i\left(1-\eta_i\right)^2+\sum_j\left(1-\gamma_j\right)^2\right],
    \end{split}
\end{equation*}
using Adam with 350 iterations from near-nominal initialization $\boldsymbol{\eta}^{(0)}=[0.9]^4$ and $\boldsymbol{\gamma}^{(0)}=[0.9]^7$. 

\subsection{Performance Evaluation}
Figure~\ref{fig:Type_1_FM} demonstrates the performance of our flow matching approach on actuator faults, i.e., Type 1. For actuator effectiveness estimation, the method achieves high accuracy with element-wise absolute errors of $|\boldsymbol{\eta}-\hat{\boldsymbol{\eta}}|= \left[0.012,0.039,0.054,0.059\right]$, corresponding to a mean parameter estimation error of approximately $4.1 \%$. The largest error occurs for $\eta_4$ at $5.9 \%$, likely due to its near-complete degradation $\left(\eta_4=0.15\right)$, which makes the parameter estimation more challenging. In Fig.~\ref{fig:Type_1_FM}, the actuator effectiveness estimates are obtained using the augmented Extended Kalman Filter (EKF) baseline, where fault parameters are added as additional parameters to the EKF (e.g., \cite{vettori2023adaptive}). Compared to our proposed FDI framework, the EKF exhibits slower convergence and reduced accuracy in estimating the fault magnitudes.

For the Type 2 scenario, we evaluate 10 different random actuator and sensor fault profiles $\Theta_{E_\mathcal{i}}$ for $\mathcal{i} \in \{0,\ldots, 9\}$, and Table~\ref{tb:flow_vs_ekf} presents comprehensive performance metrics on fault profile $\Theta_{E_1}=(\boldsymbol{\eta}=[0.3,0.9,0.7,1.0], \boldsymbol{\gamma}=[0.9,0.8,0.5,0.6,1.0,0.7,0.9])$. As shown in Table~\ref{tb:flow_vs_ekf}, our flow matching methodology achieves accurate parameter estimation in RMSE and $L_2$ Error where we compute the $\mathrm{RMSE} = \sqrt{\frac{1}{t_N} \int_0^{t_N}\|\Theta_{\textrm{true}}- \Theta_{\textrm{pred}}\|^2 dt}$ and the $L_2 \textrm{ Error} = \frac{1}{t_N} \int_0^{t_N}\|\Theta_{\textrm{true}}- \Theta_{\textrm{pred}}\| dt$ for all 10 fault hypotheses. For example, fault profiles $E_1,E_3, E_5,E_6$, and $E_9$ achieve an RMSE $= 0.000$ and $L_2$ Error $=0.000$. The confusion matrix results, shown in Fig.~\ref{fig:Type_2_FM} and Table~\ref{tb:flow_vs_ekf}, across 10 fault hypotheses with 10 noisy trajectories, demonstrate an overall classification accuracy of $67\%$, with a precision of $62.81\%$, and a false alarm rate of $3.67\%$. In comparison, the augmented EKF achieves an overall accuracy of $70\%$, a precision of $47\%$, and a high false alarm rate of $31\%$. Critically, the empirical Wasserstein distance $\mathbb{W}_2\left(\mu^{\Theta_{E_1}}, \mu^{\Theta_0}\right)=0.0717$ between the faulty and the nominal distributions is well below the theoretical upper bound of 0.7383 from Theorem~\ref{thm:wasserstein_fdi} (computed with $\bar{\Delta}_{\max }= \max _i\left|1-\eta_i\right|=0.7$), which validates our contraction-based detectability analysis. The KL divergence $\mathscr{D}_{K L}\left(\mu^{\Theta_{E_1}} \| \mu^{\Theta_0}\right)=44.119$ further confirms substantial distributional separation. 

\begin{figure}
\begin{center}
\includegraphics[width=\columnwidth]{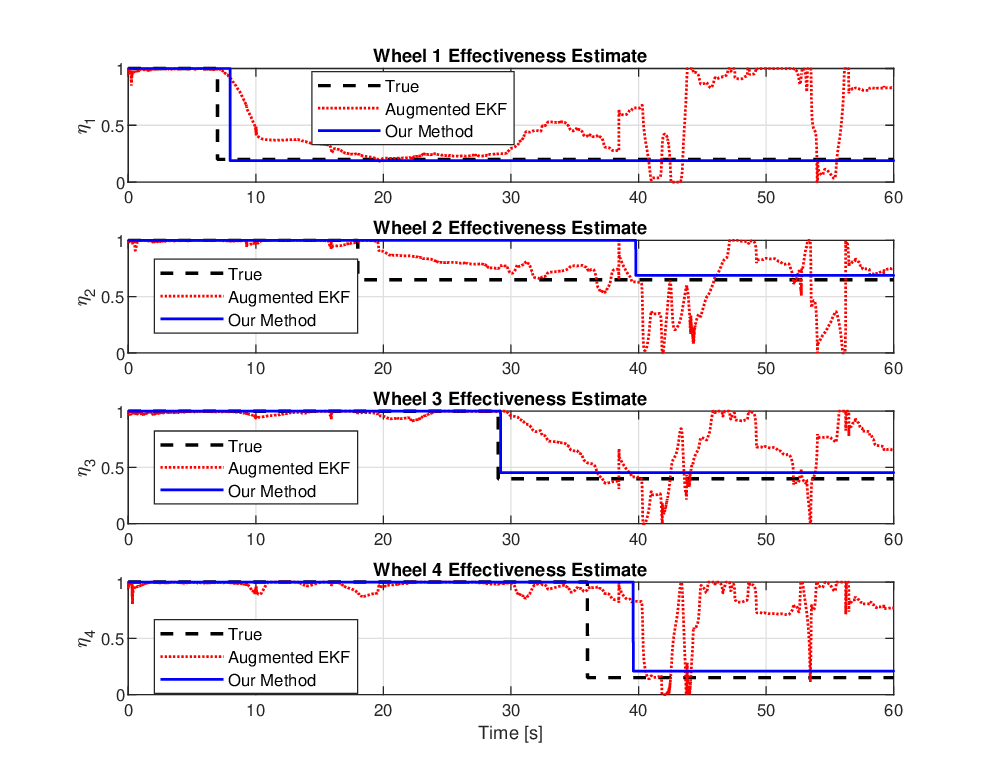}    
\caption{Plot of Type 1 faults comparing true  actuator effectiveness $\boldsymbol{\eta} = [0.20, 0.65, 0.40, 0.15]$ and onset times $\mathbf{t}_{\mathrm{start}} = [7.0, 18.0, 29.0, 36.0]$ with estimated $\hat{\boldsymbol{\eta}} = [0.188, 0.689, 0.454, 0.209]$ and $\hat{\mathbf{t}} = [8.1, 39.8, 29, 39.6]$ using our method and augmented EKF estimates.} 
\label{fig:Type_1_FM}
\end{center}
\end{figure}


\begin{figure}
\begin{center}
\includegraphics[width=9.5cm]{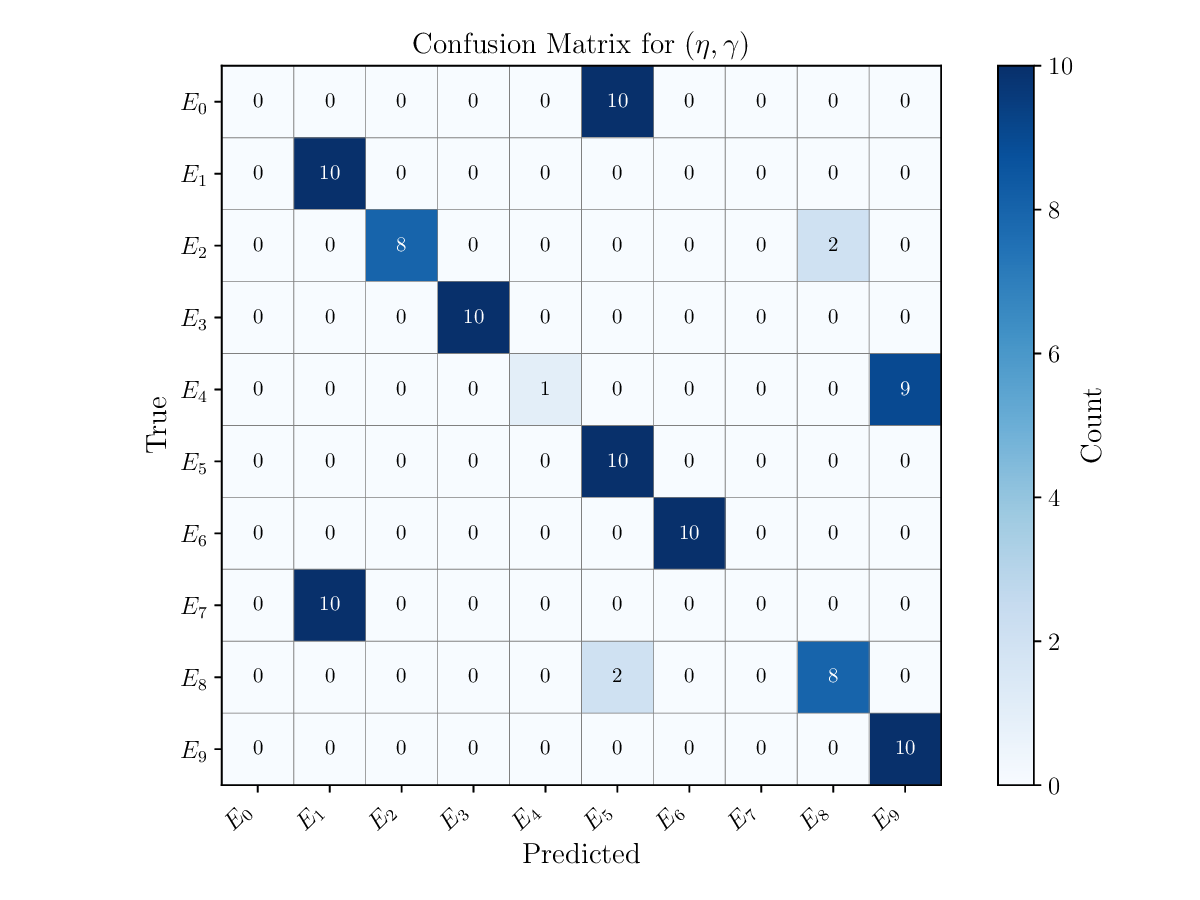}    
\caption{Plot of confusion matrix of Type 2 faults comparing 10 different random fault hypotheses with 10 trajectories for each with noise.} 
\label{fig:Type_2_FM}
\end{center}
\end{figure}

\begin{table}[hb]
\begin{center}
\caption{Performance Metrics for the Flow Matching-Based FDI Method}\label{tb:flow_vs_ekf}
\begin{tabular}{lcc}
\textbf{Metric} & \textbf{Value} \\ \hline

$\mathscr{D}_{KL}\left(\mu^{\Theta_{E_1}}\| \ \mu^{\Theta_0}\right)$ & $44.1190$  \\
$\mathbb{W}_2\left(\mu^{\Theta_{E_1}}, \mu^{\Theta_0}\right)$ & $0.0717$  \\
Theoretical $\mathbb{W}_2$ Bound ($\bar\Delta_{\max}$) & $0.7383$  \\

Overall Accuracy (All $E_\mathcal{i}$) & $67.00\%$  \\
Avg Precision (All $E_\mathcal{i}$) & $62.81\%$ \\
Avg False Alarm Rate (All $E_\mathcal{i}$) & $3.67\%$ \\
Overall RMSE (All $E_\mathcal{i}$) & $0.3104$  \\ 
Overall L2 (All $E_\mathcal{i}$) & $0.2629$  \\ \hline

\end{tabular}
\end{center}
\end{table} 

\section{Conclusion}
In this paper, we studied and developed a data-driven fault detection and identification (FDI) method for nonlinear control-affine systems. The proposed method employs a probabilistic approach, where the impact of faults is modeled through a trajectory-level density evolution. Moreover, explicit bounds on the detectability and identifiability of faults were provided through the Wasserstein metric. Our approach combines theoretical bounds derived via contraction analysis with a practical data-driven implementation using conditional flow matching to learn fault-conditioned probability flows, thereby facilitating generalization to unseen fault combinations. Evaluation of our FDI approach on a nonlinear spacecraft attitude control system showed reliable actuator and sensor fault estimation and a lower false alarm rate compared to an augmented Extended Kalman Filter (EKF) FDI method. We also showed that the empirical Wasserstein separations are consistent with theoretical predictions. 

\bibliography{ifacconf}             
                                                   







\end{document}